%
\documentclass[runningheads]{llncs}
\usepackage{amssymb}
\usepackage{textcomp}
\usepackage{amsmath}
\usepackage{mathtools}

\usepackage{amsthm}
\usepackage{algorithm}
\usepackage[noend]{algpseudocode}
\usepackage{comment}
\usepackage{graphicx}
\usepackage{color}
\usepackage{tikz}
\usepackage{cleveref}
\pagestyle{plain}
\usepackage{tikz}
\usepackage{quantikz}

\setcounter{page}{1}
%

\newcommand{\cent}[0]{\mbox{\textcent}}
\DeclareMathOperator{\mean}{\mathbf{E}}
\DeclareMathOperator{\Prob}{\mathbf{P}}
\begin{document}

\title{Shallow Implementation of Quantum Fingerprinting with Application to Quantum Finite Automata\thanks{
The research in Section 4 is supported by Russian Science Foundation Grant 24-21-00406, https://rscf.ru/en/project/24-21-00406/.
The study was funded by the subsidy allocated to Kazan Federal University for the state assignment in the sphere of scientific activities (Project No. FZSM-2024-0013).}
}
%
%
\author{Mansur Ziiatdinov\,$^{1,*}$, Aliya Khadieva\,$^{1,2}$, and Kamil Khadiev\,$^{1}$}
\authorrunning{M. Ziiatdinov, et.al.}
%
\institute{$^1$Institute of Computational Mathematics and Information Technologies, Kazan Federal University, Kazan, Russia \\ $^2$ Faculty of Computing, University of Latvia, R\={\i}ga, Latvia}
\maketitle              
\begin{abstract}
Quantum fingerprinting is a technique that maps classical input word to a quantum state. The obtained quantum state is much shorter than the original word, and its processing uses less resources, making it useful in quantum algorithms, communication, and cryptography. One of the examples of quantum fingerprinting is quantum automata algorithm for \(MOD_{p}=\{a^{i\cdot p} \mid i \geq 0\}\) languages, where $p$ is a prime number.

  However, implementing such an automaton on the current quantum hardware is not efficient.
  Quantum fingerprinting maps a word \(x \in \{0,1\}^{n}\) of length \(n\) to a state \(\ket{\psi(x)}\) of \(O(\log n)\) qubits, and uses \(O(n)\) unitary operations. Computing quantum fingerprint using all available qubits of the current quantum computers is infeasible due to a large number of quantum operations.

  To make quantum fingerprinting practical, we should optimize the circuit for depth instead of width in contrast to the previous works. We propose explicit methods of quantum fingerprinting based on tools from additive combinatorics, such as generalized arithmetic progressions (GAPs), and prove that these methods provide circuit depth comparable to a probabilistic method. We also compare our method to prior work on explicit quantum fingerprinting methods.
\keywords{quantum finite automata, quantum fingerprinting, quantum circuit, shallow quantum circuit, quantum hash}
\end{abstract}

\section{Introduction}

A quantum finite state automaton (QFA) is a generalization of classical finite automaton \cite{SY14,AY21}. Here we use the known simplest QFA model \cite{Moore2000}. Formally, a QFA is 5-tuple \( M = (Q,\allowbreak A \cup \{\cent,\mathdollar \},\allowbreak \ket{\psi_{0}},\allowbreak \mathcal{U},\allowbreak \mathcal{H}_{acc})\), where \(Q = \{q_{1}, \ldots, q_{D}\}\) is a finite set of states, \(A\) is the finite input alphabet,  $\cent,\mathdollar$ are the left and right end-markers, respectively. The state of $M$ is represented as a vector \(\ket{\psi} \in \mathcal{H}\), where $\mathcal{H}$ is the $D$-dimensional Hilbert space spanned by $\{ \ket{q_1},\ldots,\ket{q_D} \}$ (here $\ket{q_j}$ is a zero column vector except its $j$-th entry that is 1). The automaton $M$ starts in the initial state \(\ket{\psi_{0}} \in \mathcal{H}\), and makes transitions according to the operators \(\mathcal{U} = \{U_{a} \mid a \in A\}\) of unitary matrices. After reading the whole input word, the final state is observed with respect to the accepting subspace \(\mathcal{H}_{acc} \subseteq \mathcal{H}\).

Quantum fingerprinting provides a method of constructing automata for certain problems. It maps an input word \(w \in \{0,1\}^{n}\) to much shorter quantum state, its fingerprint \(\ket{\psi(w)} = U_{w}\ket{0^{m}}\), where $U_w$ is the single transition matrix representing the multiplication of all transition matrices while reading $w$  and $\ket{0^m} = \underbrace{\ket{0}\otimes \cdots \otimes \ket{0}}_{m~\text{times}} $. Quantum fingerprint captures essential properties of the input word that can be useful for computation.


One example of quantum fingerprinting applications is the QFA algorithms for \(MOD_{p}\) language~\cite{Ambainis2009}. For a given prime number \(p\), the language \(MOD_{p}\) is defined as  \(MOD_{p} = \{ a^{i} \mid i \text{ is divisible by } p\}\). Let us briefly describe the construction of the QFA algorithms for \(MOD_{p}\).

We start with a 2-state QFA $M_k$, where \(k \in \{1,\ldots,p-1\}\). The automaton \(M_{k}\) has two base states \(Q = \{q_{0}, q_{1}\}\), it starts in the state \(\ket{\psi_{0}} = \ket{q_{0}}\), and it has the accepting subspace spanned by \(\ket{q_{0}}\). At each step (for each letter) we perform the rotation
\[
  U_{a} =
  \begin{pmatrix}
    \cos \dfrac{2\pi k}{p} & \sin \dfrac{2\pi k}{p}\\ \\
    -\sin \dfrac{2\pi k}{p} & \cos \dfrac{2\pi k}{p}\\
  \end{pmatrix}.
\]
It is easy to see that this automaton gives the correct answer with probability 1 if \(w \in MOD_p\). However, if $w \notin MOD_p  $, the probability of correct answer can be close to $0$ rather than $1$ (i.e., bounded below by $ 1- \cos^2(\pi/p)  $). To boost the success probability we use \(d\) copies of this automaton, namely \(M_{k_{1}}\), \ldots, \(M_{k_{d}}\), as described below.

The QFA \(M\) for \(MOD_{p}\) has \(2d\) states: \(Q = \{q_{1,0}, q_{1,1}, \ldots, q_{d,0}, d_{d,1}\}\), and it starts in the state \(\ket{\psi_{0}} = \frac{1}{\sqrt{d}} \sum_{i=1}^{d} \ket{q_{i,0}}\). In each step, it applies the transformation defined as:
\begin{align}
  \ket{q_{i,0}} &\mapsto \cos \frac{2\pi k_{i}}{p} \ket{q_{i,0}} + \sin \frac{2\pi k_{i}}{p} \ket{q_{i,1}} \label{eq:u-transform:0}\\
  \ket{q_{i,1}} &\mapsto -\sin \frac{2\pi k_{i}}{p} \ket{q_{i,0}} + \cos \frac{2\pi k_{i}}{p} \ket{q_{i,1}}\label{eq:u-transform:1}
\end{align}
Indeed, $M$ enters into equal superposition of $d$ sub-QFAs, and each sub-QFA applies its rotation.
Thus, quantum fingerprinting technique associates the input word \(w = a^{j}\) with its fingerprint
\[
    \ket{\psi} = \frac{1}{\sqrt{d}} \sum_{i=1}^{d} \cos \frac{2\pi k_{i}j}{p} \ket{q_{i,0}} + \sin \frac{2\pi k_{i}j}{p} \ket{q_{i,1}}.
\]

Ambainis and Nahimovs~\cite{Ambainis2009} proved that this QFA accepts the language \(MOD_{p}\) with error probability that depends on the choice of the coefficients \(k_{i}\)'s. They also showed that for \(d = 2 \log(2p) / \varepsilon\) there is at least one choice of coefficients \(k_{i}\)'s such that error probability is less than \(\varepsilon\). The proof uses a probabilistic method, so these coefficients are not explicit. They also suggest two explicit sequences of coefficients: cyclic sequence \(k_{i} = g^{i} \pmod p\) for primitive root \(g\) modulo \(p\) and more complex AIKPS sequences based on the results of Ajtai et al.~\cite{Ajtai1990}.


Quantum fingerprinting is versatile and has several applications. Buhrman et al. in \cite{Buhrman2001} provided an explicit definition of quantum fingerprinting for constructing an efficient quantum communication protocol for equality checking. The technique was applied to branching programs by Ablayev and Vasiliev \cite{av2009,av2011,av2013}. Based on this research, they developed the concept of cryptographic quantum hashing \cite{av2013hash,av2014,aav2014,aa2015,aav2016,Ablayev2016a,v2016binary,vlz2017,aakv2018,aav2018,vvl2019,aav2020}. Then different versions of hash functions were applied \cite{v2016,av2020,z2016,z2016group,z2023}. The technique was extended to qudits \cite{av2022}.
This approach has been widely used in various areas such as stream processing algorithms \cite{l2009,l2006}, query model algorithms \cite{aaksv2022}, online algorithms \cite{kk2019disj,kk2022}, branching programs \cite{kk2017,kkk2022,agky16}, development of quantum devices \cite{v2016model}, automata (discussed earlier, \cite{GY17,YS10}) and others.

At the same time, the technique is not practical for the currently available real quantum computers. The main obstacle is that quantum fingerprinting uses an exponential (in the number \(m\) of qubits) circuit depth (e.g., see \cite{KZ22,bsocy2021,salehi2021cost,zkk2023} for some implementations of the aforementioned automaton \(M\)). Therefore, the required quantum volume\footnote{Quantum volume is an exponent of the maximal square circuit size that can be implemented on the quantum computer~\cite{Cross2019QuantumVolume,Wack2021QuantumPerformance}.} \(V_{Q}\) is roughly \(2^{|w| \cdot 2^{m}}\). For example, IBM reports~\cite{IBMEagle} that its Falcon r5 quantum computer has 27 qubits with a quantum volume of 128. It means that we can use only 7 of 27 qubits for the fingerprint technique.


In this paper, we investigate how to obtain better circuit depth by optimizing the coefficients used by $M$: $k_1,\ldots,k_d$.
We use generalized arithmetic progressions for generating a set of coefficients and show that such sets have a circuit depth comparable to the set obtained by the probabilistic method.

We summarize the previous and our results in the next list.
\begin{itemize}
    \item The cyclic method, for some constant \(c > 0\):
    \begin{itemize}
        \item the width is \(p^{c/\log \log p}\)
        \item the depth is \(p^{c/\log \log p}\) 
        \item explored in \cite{Ambainis2009}.
    \end{itemize}
      \item The AIKPS method:
    \begin{itemize}
        \item the width is \(\log^{2+3\epsilon} p\)
        \item the depth is \((1 + 2\epsilon) \log^{1+\epsilon}p\; \log\log p\)
        \item explored in  \cite{Razborov1993}.
    \end{itemize}
    \item The probabilistic method:
    \begin{itemize}
        \item the width is \(4\log (2p) / \varepsilon\)
        \item the depth is \(2\log (2p) / \varepsilon\)
        \item explored in \cite{Ambainis2009}.
    \end{itemize}
     \item The GAPs method (this paper):
    \begin{itemize}
        \item the width is  \(p / \varepsilon^2\) 
        \item the depth is \(\lceil \log p - 2 \log \varepsilon \rceil + 2\)
        \item developed in this paper.
    \end{itemize}
\end{itemize}

Note that \(p\) is exponential in the number of qubits \(m\). The depth of the circuits is discussed in Section~\ref{sec:shallow}.

The paper is an extended version of the \cite{ziiatdinov2023gaps} conference paper that was presented at the AFL2023 conference.

In addition, we perform computational experiments for computing parameters $K$ that minimize the error probability for our circuit and the standard circuit. We show that in the case of an ``ideal'' non-noisy quantum device, the error probability for the proposed circuit is at most twice as large as for the standard circuit. At the same time, in the case of noisy quantum devices (that are current and near-future ones), the error probability is much less than for the standard circuit, and allow us to implement QFA for $MOD_{17}$ language using $4$ qubits.

The rest of the paper is organized as follows. In Section~\ref{sec:preliminaries} we give the necessary definitions and results on quantum computation and additive combinatorics to follow the rest of the paper. Section~\ref{sec:shallow} contains the construction of the shallow fingerprinting function and the proof of its correctness. Then, we present several numerical simulations in  Section~\ref{sec:numeric}.
We conclude the paper with Section~\ref{sec:conclusions} by presenting some open questions and discussions for further research.

\section{Preliminaries}\label{sec:preliminaries}

Let us denote by \(\mathcal{H}^{2}\) two-dimensional Hilbert space, and by \((\mathcal{H}^{2})^{\otimes m}\) \(2^{m}\)-dimensional Hilbert space (i.e.,\ the space of \(m\) qubits). We use bra- and ket-notations for vectors in Hilbert space. For any natural number \(N\), we use \(\mathbb{Z}_{N}\) to denote the cyclic group of order \(N\).

Let us describe in detail how the automaton \(M\) works.
As we outlined in the introduction, the automaton \(M\) has \(2d\) states: \(Q = \{q_{1,0}, q_{1,1}, \ldots, q_{d,0}, d_{d,1}\}\), and it starts in the state \(\ket{\psi_{0}} = \frac{1}{\sqrt{d}} \sum_{i=1}^{d} \ket{q_{i,0}}\). After reading a symbol \(a\), it applies the transformation \(U_{a}\) defined by \eqref{eq:u-transform:0}, \eqref{eq:u-transform:1}:
\begin{align*}
  \ket{q_{i,0}} &\mapsto \cos \frac{2\pi k_{i}}{p} \ket{q_{i,0}} + \sin \frac{2\pi k_{i}}{p} \ket{q_{i,1}} \\
  \ket{q_{i,1}} &\mapsto -\sin \frac{2\pi k_{i}}{p} \ket{q_{i,0}} + \cos \frac{2\pi k_{i}}{p} \ket{q_{i,1}}
\end{align*}
After reading the right endmarker \(\$\), it applies the transformation \(U_{\$}\) defined in such way that \(U_{\$} \ket{\psi_{0}} = \ket{q_{1,0}}\). The automaton measures the final state and accepts the word if the result is \(q_{1,0}\).

So, the quantum state after reading the input word \(w = a^{j}\) is \[\ket{\psi} = \frac{1}{\sqrt{d}} \sum_{i=1}^{d} \cos \frac{2\pi k_{i}j}{p} \ket{q_{i,0}} + \sin \frac{2\pi k_{i}j}{p} \ket{q_{i,1}}.\] If \(j \equiv 0 \pmod p\), then \(\ket{\psi} = \ket{\psi_{0}}\), and \(U_{\$}\) transforms it into accepting state \(\ket{q_{1,0}}\), therefore, in this case, the automaton always accepts. If the input word \(w \notin MOD_{p}\), then the quantum state after reading the right endmarker \(\$\) is
\[
  \ket{\psi'} = \frac{1}{d} \Big( \sum_{i=1}^{d} \cos\frac{2\pi k_{i}j}{p} \Big) \ket{q_{1,0}} + \ldots,
\]
and the error probability is
\[
  P_{e} = \frac{1}{d^{2}} \Big( \sum_{i=1}^{d} \cos \frac{2\pi k_{i}x}{p} \Big)^{2}.
\]

In the rest of the paper, we denote by \(m\) the number of qubits in the quantum fingerprint, by \(d = 2^{m}\) the number of parameters in the set \(K\), by \(p\) the size of domain of the quantum fingerprinting function, and by \(U_{a}(K)\) the transformation defined above, which depends on the set \(K\).

Let us also define a function \(\varepsilon: \mathbb{Z}_{p}^{d} \to \mathbb{R}\) as follows:
\[
  \varepsilon(K) = \max_{x \in \mathbb{Z}_{p}} \bigg( \frac{1}{d^{2}} \Big| \sum_{j=1}^{d} \exp \frac{2\pi i k_{j}x}{p} \Big|^{2} \bigg).
\]
Note that \(P_{e} \le \varepsilon(K)\).

We also use some tools from additive combinatorics. We refer the reader to the textbook by Tao and Vu~\cite{TaoVu2006} for a deeper introduction to additive combinatorics.

An additive set \(A \subseteq Z\) is a finite non-empty subset of \(Z\), an abelian group with group operation \(+\). We refer \(Z\) as the ambient group.

If \(A,B\) are additive sets in \(Z\), we define the sum set \(A + B = \{ a + b \mid a \in A, b \in B\}\). We define additive energy \(E(A,B)\) between \(A,B\) to be
\[
E(A,B) = \bigg| \big\{ (a,b,a',b') \in A \times B \times A \times B \mid a + b = a'+b' \big\} \bigg| .
\]

Let us denote by \(e(\theta) = e^{2\pi i \theta}\), and by \(\xi \cdot x = \xi x / p\) bilinear form from \(\mathbb{Z}_p \times \mathbb{Z}_p\) into \(\mathbb{R} / \mathbb{Z}\). Fourier transform of \(f : \mathbb{Z}_{p} \to \mathbb{Z}_{p}\) is \(\hat{f}(\xi) = \mean_{x \in Z} f(x) \overline{e(\xi \cdot x)}\).

We also denote the characteristic function of the set \(A\) as \(1_{A}\), and we define \(\Prob_Z(A) = \widehat{1_{A}}(0) = |A| / |Z|\).

\begin{definition}[\cite{TaoVu2006}]
  Let \(Z\) be a finite additive group. If \(A \subseteq Z\), we define Fourier bias \(\|A\|_{\mathcal{U}}\) of the set \(A\) to be
  \[
    \| A \|_{\mathcal{U}} = \sup_{\xi \in Z \backslash \{0\}} | \widehat{1_A}(\xi) |
  \]
\end{definition}

There is a connection between the Fourier bias and the additive energy.
\begin{theorem}[\cite{TaoVu2006}]\label{thm:fourier-bias-bound}
  Let \(A\) be an additive set in a finite additive group \(Z\). Then
  \[
    \| A \|^4_{\mathcal{U}} \le \frac{1}{|Z|^3} E(A,A) - \Prob_Z(A)^4 \le \| A \|^2_{\mathcal{U}} \Prob_Z(A)
  \]
\end{theorem}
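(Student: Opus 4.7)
The plan is to compute $E(A,A)$ directly in Fourier space, separate the zero frequency, and then bracket the remaining sum of fourth powers between the maximum and a maximum-times-Parseval estimate.

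First I would introduce the representation function $r(x) = |\{(a,b) \in A \times A : a+b = x\}| = \sum_y 1_A(y) 1_A(x-y)$, so that $E(A,A) = \sum_x r(x)^2$. Under the paper's normalization $\hat{f}(\xi) = \mean_{x} f(x) \overline{e(\xi \cdot x)}$, a short calculation gives $\hat{r}(\xi) = |Z| \cdot \widehat{1_A}(\xi)^2$, and Parseval in the form $\sum_x r(x)^2 = |Z| \sum_\xi |\hat r(\xi)|^2$ then yields the key identity
\[
  \frac{E(A,A)}{|Z|^3} = \sum_{\xi \in Z} |\widehat{1_A}(\xi)|^4.
\]

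Next I would peel off the $\xi = 0$ term. Since $\widehat{1_A}(0) = \Prob_Z(A)$, rearranging gives
\[
  \frac{E(A,A)}{|Z|^3} - \Prob_Z(A)^4 = \sum_{\xi \ne 0} |\widehat{1_A}(\xi)|^4.
\]
The lower bound in the theorem is then immediate: a sum of non-negative terms is at least any of its summands, so the right-hand side is at least $\max_{\xi \ne 0}|\widehat{1_A}(\xi)|^4 = \|A\|_{\mathcal{U}}^4$.

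For the upper bound I would factor $|\widehat{1_A}(\xi)|^4 \le \|A\|_{\mathcal{U}}^2 \cdot |\widehat{1_A}(\xi)|^2$ for every $\xi \ne 0$, sum, and apply Parseval $\sum_\xi |\widehat{1_A}(\xi)|^2 = \mean_x 1_A(x)^2 = \Prob_Z(A)$ to get
\[
  \sum_{\xi \ne 0} |\widehat{1_A}(\xi)|^4 \le \|A\|_{\mathcal{U}}^2 \bigl(\Prob_Z(A) - \Prob_Z(A)^2\bigr) \le \|A\|_{\mathcal{U}}^2 \Prob_Z(A),
\]
finishing the other inequality. The only delicate point is keeping the normalization consistent (the $\mean_x$ vs.\ $\sum_x$ conventions produce several stray factors of $|Z|$), so the main work is checking that the identity $E(A,A)/|Z|^3 = \sum_\xi |\widehat{1_A}(\xi)|^4$ comes out with exactly this exponent under the paper's averaged Fourier transform; everything afterwards is a one-line application of Parseval and taking suprema.
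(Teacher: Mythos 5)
Your proof is correct. The paper does not prove this statement itself---it imports it directly from Tao and Vu \cite{TaoVu2006}---and your argument is precisely the standard one from that source: the convolution identity $\hat r(\xi)=|Z|\,\widehat{1_A}(\xi)^2$ plus Parseval gives $E(A,A)/|Z|^3=\sum_\xi |\widehat{1_A}(\xi)|^4$, and extracting the $\xi=0$ term yields both bounds. The normalization bookkeeping you flag as the delicate point indeed comes out with exactly the exponent $|Z|^3$ under the averaged Fourier transform used here, so there is nothing to repair.
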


\begin{definition}[\cite{TaoVu2006}]
  Generalized arithmetic progression (GAP) of dimension \(d\) is a set
  \[
    A = \{ x_0 + n_1 x_1 + \ldots + n_d x_d \mid 0 \le n_1 \le N_1, \cdots, 0 \le n_d \le N_d \},
  \]
  where \(x_0, x_1, \ldots, x_d, N_1, \ldots, N_d \in Z\).
  The size of GAP is a product \(N_1 \cdots N_d\).
  If the size of set $A$, \(|A|\), equals to \(N_1 \cdots N_d\), we say that GAP is proper.
\end{definition}


%

\section{Shallow Fingerprinting}\label{sec:shallow}

Quantum fingerprint can be computed by the quantum circuit given in Figure~\ref{fig:deep-circuit}. The last qubit is rotated by a different angle \(2\pi k_{j} x / q\) in different subspaces enumerated by \(\ket{j}\). Therefore, the circuit depth is \(|K| = t = 2^{m}\). As the set \(K\) is random, it is unlikely that the depth can be less than \(|K|\).

Let us note that fingerprinting is similar to quantum Fourier transform. Quantum Fourier transform computes the following transformation:
\begin{equation}\label{eq:qft}
  \ket{x} \mapsto \frac{1}{N} \sum_{k=0}^{N-1} \omega_{N}^{xk}\ket{k},
\end{equation}
where \(\omega_{N} = e(1/N)\). Here is the quantum fingerprinting transform:
\[
  \ket{x} \mapsto \frac{1}{t} \sum_{j=1}^{t} \omega_{N}^{k_{j}x}\ket{k}.
\]

The depth of the circuit that computes quantum Fourier transform is \(O((\log N)^{2})\), and it heavily relies on the fact that in Eq.~\eqref{eq:qft} the sum runs over all \(k = 0, \ldots, N-1\). Therefore, to construct a shallow fingerprinting circuit we desire to find a set \(K\) with special structure.


Suppose that we construct a coefficient set \(K \subset \mathbb{Z}_{p}\) in the following way. We start with a set \(T = \{t_{1}, \ldots, t_{m}\}\) and construct the set of coefficients as a set of sums of all possible subsets:
\[
  K = \Big\{ \sum_{t \in S} t \mid S \subseteq T \Big\},
\]
where we sum modulo \(p\).


The quantum fingerprinting function with these coefficients can be computed by a circuit of depth \(O(m)\)~\cite{Kalis2018} (see Figure~\ref{fig:shallow-circuit}).

Finally, let us prove why the construction of the set \(K \subset \mathbb{Z}_{p}\) works.
\begin{theorem}
  Let \(\varepsilon > 0\), let \(m = \lceil \log p - 2 \log \varepsilon \rceil\) and \(d = 2^{m}\).

  Suppose that the number \(t_{0} \in \mathbb{Z}_{p}\) and the set \(T = \{t_{1}, \ldots, t_{m}\} \subset \mathbb{Z}_{p}\) are such that
  \[
    B = \{ 2 t_{0} + n_1 t_1 + \cdots + n_m t_m \mid 0 \le n_1 < 3, \ldots, 0 \le n_m < 3\}
  \]
  is a proper GAP.

  Then the set \(A\) defined as
  \[
    A = \left\{ t_{0} + \sum_{t \in S} t \mid S \subseteq T \right\}
  \]
  has \(\varepsilon(A) \le \varepsilon\).
\end{theorem}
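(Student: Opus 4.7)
The plan is to reduce $\varepsilon(A)$ to the Fourier bias $\|A\|_{\mathcal{U}}$ and then control the Fourier bias via additive energy using \Cref{thm:fourier-bias-bound}. Writing out the definitions, one has $\widehat{1_A}(x) = \frac{1}{p}\sum_{a\in A} e(-ax/p)$, so $\varepsilon(A) = \frac{p^{2}}{d^{2}}\|A\|_{\mathcal{U}}^{2}$. Hence it suffices to prove $\|A\|_{\mathcal{U}}^{2}\le \varepsilon\, d^{2}/p^{2}$, and by \Cref{thm:fourier-bias-bound} this reduces to bounding $E(A,A)$.

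Next I would compute $E(A,A)$ exactly using the GAP structure. The key observation is that $A+A = B$ as sets: writing $a_{S}=t_{0}+\sum_{t\in S}t$, one has $a_{S_{1}}+a_{S_{2}} = 2t_{0}+\sum_{i=1}^{m}\bigl(1_{S_{1}}(t_{i})+1_{S_{2}}(t_{i})\bigr)t_{i}$, whose coefficient vector lies in $\{0,1,2\}^{m}$. Because $B$ is assumed to be a proper GAP, the map $(n_{1},\ldots,n_{m})\mapsto 2t_{0}+\sum n_{i}t_{i}$ is injective, so the equation $a_{S_{1}}+a_{S_{2}}=a_{S_{3}}+a_{S_{4}}$ is equivalent to the coordinate-wise equalities $1_{S_{1}}(t_{i})+1_{S_{2}}(t_{i})=1_{S_{3}}(t_{i})+1_{S_{4}}(t_{i})$ for all $i$. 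For each $i$ the number of pairs in $\{0,1\}^{2}$ summing to $0,1,2$ is $1,2,1$ respectively, so
\[
  E(A,A) \;=\; \sum_{c\in\{0,1,2\}^{m}} \prod_{i=1}^{m} w(c_{i})^{2} \;=\; \prod_{i=1}^{m}\bigl(1^{2}+2^{2}+1^{2}\bigr) \;=\; 6^{m},
\]
where $w(0)=w(2)=1$, $w(1)=2$. The same properness argument shows $|A|=d=2^{m}$, so that $\Prob_{\mathbb{Z}_{p}}(A)=d/p$.

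Plugging into \Cref{thm:fourier-bias-bound} gives $\|A\|_{\mathcal{U}}^{4}\le E(A,A)/p^{3} = 6^{m}/p^{3}$, hence $\|A\|_{\mathcal{U}}^{2}\le 6^{m/2}/p^{3/2}$. Therefore
\[
  \varepsilon(A) \;\le\; \frac{p^{2}}{d^{2}} \cdot \frac{6^{m/2}}{p^{3/2}} \;=\; \sqrt{p}\,\Bigl(\tfrac{\sqrt{6}}{4}\Bigr)^{m} \;\le\; \sqrt{p}\,\Bigl(\tfrac{1}{\sqrt{2}}\Bigr)^{m} \;=\; \sqrt{p\,/\,2^{m}},
\]
using the crude (but sufficient) bound $6/16\le 1/2$. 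Finally, the hypothesis $m=\lceil\log p-2\log\varepsilon\rceil$ yields $2^{m}\ge p/\varepsilon^{2}$, so $\sqrt{p/2^{m}}\le\varepsilon$ as required.

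The only delicate step is the energy computation: one must justify that the sum $a_{S_{1}}+a_{S_{2}}$ determines the multiset coefficient vector in $\{0,1,2\}^{m}$, which is exactly what properness of $B$ gives, and which also underlies the claim $|A|=2^{m}$. Everything else is a routine application of \Cref{thm:fourier-bias-bound} and elementary manipulation of the exponents.
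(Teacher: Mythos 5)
Your proposal is correct and follows essentially the same route as the paper: reduce $\varepsilon(A)$ to the Fourier bias, bound the additive energy $E(A,A)$ via the properness of the GAP $B$, and apply Theorem~\ref{thm:fourier-bias-bound}. The only cosmetic difference is that you compute $E(A,A)=6^{m}$ by a coordinate-wise product formula and keep that exact value, whereas the paper counts representations $R_{n}(A)=2^{|c_{1}(n)|}$, arrives at the same sum $\sum_{j}\binom{m}{j}2^{m+j}=6^{m}$, and relaxes it to $2^{3m}$; both yield the same final bound $\varepsilon(A)\le\sqrt{p/d}\le\varepsilon$.
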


Let us outline the proof of this theorem. Firstly, we estimate the number of solutions to \(a + b = n\). Secondly, we use it to bound the additive energy \(E(A,A)\) of the set \(A\). Thirdly, we bound the Fourier bias \(\|A\|_{{\mathcal{U}}}\). Finally, we get a bound on \(\varepsilon(A)\) in terms of \(p\) and \(m\).

\begin{proof}
  Let us denote a set \(R_{n}(A)\) of solutions to \(a + b = n\), where \(a,b \in A\) and \(n \in \mathbb{Z}_{p}\):
  \[
    R_{n}(A) = \{ (a,b) \mid a+b = n; \; a,b \in A\}.
  \]
  Note that we have \(E(A,A) = \sum_{n\in Z} {R_n(A)}^2\).

  Suppose that \(n\) is represented as \(n = 2t_{0} + \sum_{i=1}^m \gamma_i t_i\), \(\gamma_i \in \{0,1,2\}\). If such representation exists, it is unique, because \(B\) is a proper GAP.
  Let us denote \(c_0 := \{i \mid \gamma_i = 0\}\), \(c_1 := \{i \mid \gamma_i = 1\}\), \(c_2 := \{i \mid \gamma_i = 2\}\).
  It is clear that \(c_0 \uplus c_1 \uplus c_2 = [m]\).

  Now suppose that \(n = a + b\) for some \(a,b \in A\). But \(a = t_{0} + \sum_i \alpha_i t_i\) and \(b = t_{0} + \sum_i \beta_i t_i\), \(\alpha_i, \beta_i \in \{0,1\}\).
  We get that if \(i \in c_{0}\) or \(i \in c_{2}\) then the corresponding coefficients \(\alpha_{i}\) and \(\beta_{i}\) are uniquely determined.
  Consider \(i \in c_{1}\). Then we have two choices: either \(\alpha_{i} = 1; \beta_{i} = 0\), or \(\alpha_{i} = 0; \beta_{i} = 1\).
  Therefore, we have \(R_n(A) = 2^{|c_1(n,A)|}\).

  We have that
  \[
    E(A,A) = \sum_{n \in Z} {R_n(A)}^2 = \sum_{n \in Z} 2^{2|c_1(n,A)|}.
  \]
  Using the fact that \(|c_0(n,A)| + |c_1(n,A)| + |c_2(n,A)| = m\), we see that
  \[
    E(A,A) = \sum_{n \in Z} 2^{2|c_1(n,A)|} = \sum_{j=0}^m \binom{m}{j} 2^{m-j} 2^{2j} = \sum_{j=0}^m \binom{m}{j} 2^{m+j} \le 2^{3m}
  \]

  We can bound the Fourier bias by Theorem~\ref{thm:fourier-bias-bound}:
  \[
    \| A \|^4_{\mathcal{U}} \le \frac{1}{|Z|^3} E(A,A) - \Prob_Z(A)^4 \le \| A \|^2_{\mathcal{U}} \Prob_Z(A)
  \]
  \[
    \| A \|_{\mathcal{U}}^4 \le \frac{2^{3m}}{2^{3 \cdot 2^m}} - \frac{2^{4m}}{2^{4 \cdot 2^m}} = \frac{d^3}{2^{3d}} - \frac{d^4}{2^{4d}}
  \]
  \[
    \| A \|_{\mathcal{U}} \le \frac{d^{3/4}}{p^{3/4}}
  \]

  Finally, we have
  \[
    \varepsilon(A) = \Big( \frac{p}{d} \|A\|_{\mathcal{U}} \Big)^2 \le \frac{p^{1/2}}{d^{1/2}}.
  \]

  By substituting the definitions of \(d\) and \(m\), we prove the theorem.
\end{proof}

%

\begin{corollary}
  The depth of the circuit that computes \(U_{a}(A)\) is \(\lceil \log p - 2 \log \epsilon \rceil\).
\end{corollary}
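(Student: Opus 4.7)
The plan is to read off the depth directly from the explicit circuit for the set $A$ that was already sketched before the theorem (Figure~\ref{fig:shallow-circuit}, following \cite{Kalis2018}), and then substitute the value of $m$ from the preceding theorem. The corollary is not really an analytic claim but a circuit-counting claim, so the whole work is to identify the critical path of the circuit.

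First I would describe the circuit in three stages on $m+1$ qubits, where an $m$-qubit \emph{index register} labels subsets $S \subseteq T$ in binary and a single \emph{phase qubit} accumulates the rotation. Stage~(i): one parallel layer of $m$ Hadamard gates puts the index register into uniform superposition over subsets, contributing depth $1$. Stage~(ii): one rotation of the phase qubit by angle $2\pi t_0 x/p$, contributing depth $1$. Stage~(iii): for each $i=1,\ldots,m$, a rotation of the phase qubit by angle $2\pi t_i x/p$ controlled on the $i$-th index qubit. In the subspace labelled by $\ket{S}$ these rotations compose, by commutativity of single-axis rotations, to a rotation by $2\pi (t_0 + \sum_{i\in S} t_i)x/p$, which is exactly the action of $U_a(A)$ restricted to that subspace.

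Second, I would count the depth. Stages (i) and (ii) together use a constant number of layers. In stage (iii) every gate has the phase qubit as its target, so no two of these $m$ controlled rotations can be scheduled in the same layer, no matter how one permutes them; therefore stage (iii) contributes depth exactly $m$. Adding the constants gives total depth $m + O(1)$, matching the $\lceil \log p - 2 \log \varepsilon \rceil + 2$ figure from the summary table in the introduction; the corollary drops the additive constant associated with the Hadamard layer and the $t_0$-rotation. Substituting $m = \lceil \log p - 2 \log \varepsilon \rceil$ from the theorem then yields the claimed bound.

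I do not expect a real obstacle here. The only thing to be careful about is justifying that stage~(iii) cannot be compressed below depth $m$: one has to observe that every controlled rotation in that stage shares the phase qubit as its target, so the \emph{dependency chain on that single qubit} already has length $m$, independent of how one reorders gates or introduces ancillas (since ancillas cannot change which qubit carries the accumulated phase). This should be a one-sentence remark. The rest is bookkeeping of additive constants, which I would state but not belabour.
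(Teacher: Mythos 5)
Your proposal is correct and matches the paper's (implicit) argument: the paper states the corollary without proof, relying on the shallow circuit of Figure~\ref{fig:shallow-circuit}, which is exactly the one-controlled-rotation-per-element-of-$T$ construction you describe, so the depth is $m=\lceil \log p - 2\log\varepsilon\rceil$ up to the additive constant for the Hadamard layer and the $t_0$ rotation. Your remark reconciling the corollary's figure with the $\lceil \log p - 2\log\varepsilon\rceil+2$ entry in the introduction's summary list is the right reading of that discrepancy.
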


\begin{theorem}[Circuit depth for AIKPS sequences]
  For given \(\varepsilon > 0\), let
  \begin{align*}
    R &= \{ r \mid r \text{ is prime}, (\log p)^{1+\varepsilon}/2 < r < (\log p)^{1+\varepsilon}\},\\
    S &= \{ 1, 2, \ldots, (\log p)^{1+2\varepsilon}\}, \\
    T &= \{ s \cdot r^{-1} \mid r\in R, s \in S\},
  \end{align*}
  where \(r^{-1}\) is the inverse of \(r\) modulo \(p\).

  Then the depth of the circuit that computes \(U_{a}(T)\) is less than \((1 + 2\epsilon) \log^{1+\epsilon}p\; \log\log p\).
\end{theorem}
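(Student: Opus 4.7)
The plan is to exploit the product structure $T = \{s r^{-1} : s \in S,\, r \in R\}$ by re-using the ``binary expansion of the rotation angle'' trick that gives the GAP construction its shallow depth. I would index the fingerprinting superposition by a pair $(r, s)$, using two index registers: one of $\lceil \log_{2} |R| \rceil$ qubits holding the prime $r$, and one of $\lceil \log_{2} |S| \rceil$ qubits holding $s$ in binary. After the usual preparation of the uniform superposition $\frac{1}{\sqrt{|T|}} \sum_{(r,s)} \ket{r}\ket{s}\ket{0}$, the transformation $U_{a}(T)$ must apply to the target qubit a rotation of angle $2\pi s r^{-1} x / p$ conditioned on the index register holding $\ket{r}\ket{s}$.

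The key observation is that for each fixed prime $r \in R$, writing $s = \sum_{i} s_{i} 2^{i}$ in binary gives
\[
  \frac{2 \pi s r^{-1} x}{p} \;=\; \sum_{i} s_{i} \cdot \frac{2 \pi \cdot 2^{i} r^{-1} x}{p}.
\]
Consequently the block-$r$ rotation decomposes into $\lceil \log_{2} |S| \rceil$ elementary rotations on the target qubit, each of angle $2\pi \cdot 2^{i} r^{-1} x / p$ and each controlled on the single bit $s_{i}$ of the second register (together with the first register being in state $\ket{r}$ to restrict to block $r$). This is precisely the decomposition used for the GAP set earlier in the paper; it replaces one multi-controlled rotation over the index of an AP by a chain of $\lceil \log_{2} |S| \rceil$ single-bit-of-$s$-controlled rotations on the shared target qubit.

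Because every rotation acts on the same target qubit, the $|R|$ blocks cannot be parallelized and must be scheduled one after another, giving total depth $|R| \cdot \lceil \log_{2} |S| \rceil$. Since every element of $R$ is strictly less than $(\log p)^{1+\varepsilon}$, we trivially have $|R| < (\log p)^{1+\varepsilon}$; and $\lceil \log_{2} |S| \rceil \le (1+2\varepsilon) \log \log p$ by the definition of $S$. Multiplying these two bounds yields the claimed depth less than $(1+2\varepsilon) \log^{1+\varepsilon} p \cdot \log \log p$.

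I do not expect a substantive obstacle: the statement is really a depth-accounting corollary of the fact that the AIKPS set is a disjoint union of $|R|$ arithmetic progressions $\{s r^{-1} : s \in S\}$, each of which admits the shallow binary-expansion circuit. The only mild subtlety is the one-time preparation of the uniform superposition over $(r, s)$ when $|R|$ and $|S|$ are not powers of two, but this is independent of the input letter and does not enter the depth of $U_{a}(T)$ as accounted in the earlier part of the section.
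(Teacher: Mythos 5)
Your proposal is correct and follows essentially the same route as the paper: decompose $T$ into the $|R|$ arithmetic progressions $S \cdot \{r_j^{-1}\}$, compute each block by the binary-expansion circuit of depth about $\lceil(1+2\varepsilon)\log\log p\rceil$ (one single-bit-controlled rotation per bit of $s$ on the shared target qubit), concatenate the blocks sequentially, and multiply by $|R| < (\log p)^{1+\varepsilon}$. Your remark about explicitly conditioning each block on the $\ket{r}$ register is a slightly more careful account of how the blocks are kept disjoint than the paper's figure-based description, but it does not change the argument or the depth bound.
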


\begin{proof}
  Let us denote the elements of \(R\) by \(r_{1}, r_{2}, \ldots\). Let \(S\cdot \{r^{-1}\}\) be a set \(\{s \cdot r^{-1} \mid s \in S\}\).

  Consider the following circuit \(\mathcal{C}_{j}\) (see Figure~\ref{fig:aikps-circuit}) with \(w=\lceil (1+2\varepsilon)\log\log p \rceil + 1\) wires.

  The circuit \(\mathcal{C}_{j}\) has depth \(\lceil (1+2\varepsilon)\log\log p \rceil + 1\) and computes the transformation \(U_{a}(S \cdot \{r_{j}^{-1}\})\). By repeating the same circuit for all \(r_{j} \in R\) we get the required circuit for \(U_{a}(T)\) (see Figure~\ref{fig:full-aikps-circuit}).

  Since \(|R| < (\log p)^{1+\varepsilon}\), we obtain that the depth of the circuit \(U_{a}(T)\) is less than
  \[
  (1 + 2\epsilon) \log^{1+\epsilon}p\; \log\log p . \qedhere
  \]

\end{proof}

\section{Numerical Experiments}\label{sec:numeric}

We conduct the following numerical experiments. We compute sets of coefficients $K$ for the automaton for the language $MOD_p$  with minimal computational error.

Finding an optimal set of coefficients  is an optimization problem with many parameters, and the running time of a brute force algorithm is large, especially with an increasing number $m$ of control qubits and large values of parameter $p$. Then, the original automaton has $2d$ states, where $d=2^m$. We observe circuits for several $m$ values and use a heuristic method for finding the optimal sets $K$ with respect to an error minimization. For this purpose, the coordinate descend method \cite{wright2015coordinate} is used.

 We find an optimal sets of coefficients for different values of $p$ and $m$ and compare computational errors of original and shallow fingerprinting algorithms for the automaton (see Figure~\ref{fig:s5}). Namely, we set $m=3,4,5$ and find sets using the coordinate descend method  for each case that minimizes $\varepsilon(K)$. Even heuristic computing, for $s>5$, takes exponentially more computational time and it is hard to implement on our devices.

One can note that difference between errors becomes bigger with increasing $m$, especially for big values $p$. The program code and numerical data are presented in a git repository \cite{paramsComputing}.

The graphics in Figure~\ref{fig:s5-prop} show a proportion of the errors of the original automaton over the errors of the shallow automaton for $m=3,4,5$ and the prime numbers until 1013.

As we see, for a number of control qubits $m=3$, the difference between the original and shallow automata errors is approximately constant. The ratio of values fluctuates between 1 and 1.2. In the case $m=4$, this ratio is approximately 1.5 for almost all observed values $p$. The ratio of errors is nearly  between 1.5 and 3, for $m=5$.

According to the results of our experiments, the circuit depth $m+1$ is enough for valid computations, while the original circuit uses $O(2^m)$ gates. Since the shallow circuit is much simpler than the original one, its implementation on real quantum machines is much easier. For instance, in such machines as IBMQ Manila or Baidu quantum computer, a ``quantum computer'' is represented by a linearly connected sequence of qubits. CX-gates can be applied only to the neighbor qubits. For such a linear structure of qubits, the shallow circuit can be implemented using $3m+3$ CX-gates. Whereas a nearest-neighbor decomposition \cite{mottonen2006decompositions} of the original circuit requires $O(d \log d)=O(m2^m)$ CX-gates.

\subsection{Numerical Experiments for Noisy Device }
The numerical experiments presented above assume that the device has no noise, i.e. all operators and measurements are always correct. At the same time, current quantum devices are in the Noisy Intermediate-Scale Quantum (NISQ) era \cite{preskill2018quantum}, which means that each operator is noisy. Therefore, we invoke a new series of computational experiments that find the set of coefficients $K$ for the automaton that recognizes the language $MOD_p$, so that we can separate members and non-members of the language on a noisy simulator of the IBMQ quantum machine. We choose $m=3$ control qubits and $p=17$. Our device uses $4$ qubits. At the same time, it is known that any deterministic device needs at least $5$ bits to recognize the $MOD_{17}$ language \cite{af98}.

For these parameters, we invoke a brute-force algorithm to find the set $K$ that satisfies two conditions:
\begin{enumerate}
    \item The probability of accepting member words should be as high as possible.
    \item The probability of accepting non-member words should be as small as possible.
\end{enumerate}
For this reason we minimize a value $diff=\Prob(6\cdot p)-max\{\Prob(r):r$ $\mod$ $p\neq 0$, $1\leq r\leq 6\cdot p+9\}$, where $\Prob(i)$ is the acceptance probability for a word of length $i$. The result set is $K=\{4, 8, 12, 6\}$. After that, we invoke the circuit for this automaton on words of length at most $7\cdot p+9=128$. We use an optimization of the circuit from \cite{ksy2024} that allows us to use the Rz operator instead of the Ry operator. It is presented in Figure \ref{fig:short-rz}. 

It is important to use this optimization because the emulator of a real IBMQ device only allows to use a limited set of possible gates. Especially, since it allows to use of the Rz operator but not the Ry operator. 

The program has been called 10000 times and we compute the number of shots that return the state $|0\rangle$ (accepting state). We use the notation $\tilde{P}(i)$ for this number, where $i$ is the length of the word. These numbers $\tilde{P}(i)$ for each length $i$ for $1\leq i\leq 129$ are presented in Section \ref{apx:noisy-devices} and in Figure \ref{fig:short-results}. We can say that it is a statistical representation of the probability $\Prob(i)$.

Analyzing the data, we can say that each member $i$ of $MOD_p$ has $\tilde{P}(i)\geq 1021$. At the same time, each non-member $i$ has $\tilde{P}(i)\leq 925$. So we can choose a threshold $\tilde{\lambda}=1000$. For any length $i$ we can say, that it is a member iff $\tilde{P}>\tilde{\lambda}$.

It is an analog of the acceptance criteria for automata using the cut-point \cite{SY14}. If we choose a cutpoint $\lambda=0.1$ and a small constant $\varepsilon=0.001$, then we can say that if a word $i$ is a member of the language, then $\Prob(i)>\lambda+\varepsilon$. If a word $i$ is a non-member, then $\Prob(i)<\lambda-\varepsilon$.

We do similar experiments for the standard circuit of fingerprinting for automata for the $MOD_p$ language. Since $m=3$, we should find $2^m=8$ parameters for angles. We cannot use the brute force algorithm for searching parameters because each invocation of the IBMQ emulate function takes time. We use the coordinate descend method for the first $6$ parameters and brute force for the last two parameters. The objective function for minimization was $diff'=\Prob(p)-\{\Prob(r):r$ $\mod$ $p\neq 0$, $1\leq r\leq 2\cdot p+10\}$. The target parameters are $K=\{14, 9, 6, 7, 10, 4, 2, 16\}$.

Then we call the circuit for this automaton on words of length at most $7\cdot p+9=128$. The program has been run 10000 times and we compute $\tilde{P}(i)$, which is the number of shots that return the state $|0\rangle$ (accepting state). These numbers $\tilde{P}(i)$ for each length $i$ for $1\leq i\leq 128$ are presented in Section \ref{apx:noisy-devices} and Figure \ref{fig:ucr-results}.

When analyzing the data, only few members like  $i=p$, $i=2p$, $i=4p$, and $i=7p$ can be separated from non-members. Other members of $MOD_p$ cannot be separated from members, for example $i=3p$, $i=5p$, and $i=6p$. In addition, we can see that the number $\tilde{P}(1)>\tilde{P}(p)$. Our algorithm cannot find a set $K$ that violates this condition. At the same time, we can ignore $i=1$ because it can be checked with additional conditions. Even in this case, the algorithm is only useful for detecting some members with a threshold of $\tilde{\lambda}=1000$, but we cannot detect other members. 

Note that $\tilde{P}(p)$,  in the case of the shallow circuit, is much higher than in the case of the standard circuit. 
For instance, $\tilde{P}(p)>5100$ for the shallow circuit, which corresponds to $\Prob(p)>0.51$; while for the standard circuit the corresponding probability would be much smaller than $0.5$ (approximately $0.1$).

Finally, we can see that on noisy devices (emulators) the shallow circuit is more efficient than the standard circuit even if for the ``ideal'' device we have an opposite situation.

\section{Conclusions}\label{sec:conclusions}

We show that generalized arithmetic progressions generate some sets of coefficients \(k_{i}\) for the quantum fingerprinting technique with provable properties. These sets have large sizes, but their depth is small and comparable to the depth of sets obtained by the probabilistic method. These sets can be used in implementations of quantum finite automata suitable for running on current quantum hardware.

We perform numerical simulations. In the case of ``ideal'' (non-noisy device), they show that the actual performance of the coefficients found by our method for quantum finite automata is not much worse than the performance of the other methods. At the same time, we show that our method is much more efficient for IBMQ noisy device emulators.

Optimizing the implementation of quantum finite automata for depth is also an open question. The lower bound on the size of \(K\) with respect to \(p\) and \(\varepsilon\) is known~\cite{Ablayev2016a}. Therefore, for given \(p\) and \(\varepsilon\), quantum finite automata cannot have less than \(O(\log p / \varepsilon)\) states. At the same time, to our knowledge, a lower bound on the circuit depth of the transition function implementation is not known. Thus, we pose an open question: Is it possible to implement a transition function with a depth less than \(O(\log p)\)? What is the lower bound?

\section{Numerical Experiments for Noisy Devices} \label{apx:noisy-devices}
The data for the shallow circuit for fingerprinting is presented in the following table.
\begin{center}
\begin{tabular}{| c| c |c| c| c |c| c| c |c| c| c |}
\hline
the length of a word & 1 & 2 & 3 & 4 & 5 & 6 & 7 & 8 & 9 & 10\\
\hline
the count & 122 & 925 & 61 & 92 & 187 & 539 & 184 & 91 & 77 & 237\\
\hline
\hline
the length of a word & 11 & 12 & 13 & 14 & 15 & 16 & \textbf{17} & 18 & 19 & 20\\
\hline
the count & 514 & 273 & 142 & 149 & 737 & 571 & \textbf{6439} & 570 & 875 & 312\\
\hline
\hline
the length of a word & 21 & 22 & 23 & 24 & 25 & 26 & 27 & 28 & 29 & 30\\
\hline
the count & 240 & 293 & 586 & 348 & 231 & 235 & 271 & 604 & 348 & 181\\
\hline
\hline
the length of a word & 31 & 32 & 33 & \textbf{34} & 35 & 36 & 37 & 38 & 39 & 40\\
\hline
the count & 268 & 626 & 793 & \textbf{3870} & 691 & 701 & 462 & 377 & 432 & 557\\
\hline
\hline
the length of a word & 41 & 42 & 43 & 44 & 45 & 46 & 47 & 48 & 49 & 50\\
\hline
the count & 471 & 354 & 358 & 365 & 603 & 402 & 412 & 360 & 737 & 845\\
\hline
\hline
the length of a word & \textbf{51} & 52 & 53 & 54 & 55 & 56 & 57 & 58 & 59 & 60\\
\hline
the count & \textbf{2608} & 756 & 613 & 453 & 427 & 564 & 665 & 466 & 440 & 396\\
\hline
\hline
the length of a word & 61 & 62 & 63 & 64 & 65 & 66 & 67 & \textbf{68} & 69 & 70\\
\hline
the count & 497 & 560 & 494 & 463 & 523 & 671 & 701 & \textbf{1981} & 758 & 754\\
\hline
\hline
the length of a word & 71 & 72 & 73 & 74 & 75 & 76 & 77 & 78 & 79 & 80\\
\hline
the count & 529 & 468 & 595 & 594 & 600 & 385 & 473 & 570 & 567 & 579\\
\hline
\hline
the length of a word & 81 & 82 & 83 & 84 & \textbf{85} & 86 & 87 & 88 & 89 & 90\\
\hline
the count & 501 & 435 & 667 & 717 & \textbf{2011} & 775 & 679 & 544 & 532 & 512\\
\hline
\hline
the length of a word & 91 & 92 & 93 & 94 & 95 & 96 & 97 & 98 & 99 & 100\\
\hline
the count & 626 & 586 & 495 & 552 & 541 & 631 & 472 & 524 & 448 & 641\\
\hline
\hline
the length of a word & 101 & \textbf{102} & 103 & 104 & 105 & 106 & 107 & 108 & 109 & 110\\
\hline
the count & 678 & \textbf{1527} & 702 & 688 & 495 & 498 & 561 & 621 & 587 & 527\\
\hline
\hline
the length of a word & 111 & 112 & 113 & 114 & 115 & 116 & 117 & 118 & \textbf{119} & 120\\
\hline
the count & 521 & 457 & 632 & 556 & 518 & 546 & 692 & 697 & \textbf{1021} & 694\\
\hline
\hline
the length of a word & 121 & 122 & 123 & 124 & 125 & 126 & 127 & 128 &   & \\
\hline
the count & 608 & 558 & 631 & 546 & 568 & 560 & 421 & 582 &   &  \\
\hline
\end{tabular}
\end{center}

The data for the standard circuit for fingerprinting is presented in the following table.
\begin{center}
\begin{tabular}{| c| c |c| c| c |c| c| c |c| c| c |}
\hline
the length of a word & 1 & 2 & 3 & 4 & 5 & 6 & 7 & 8 & 9 & 10\\
\hline
the count & 4042 & 82 & 654 & 388 & 217 & 463 & 280 & 210 & 364 & 318\\
\hline
\hline
the length of a word & 11 & 12 & 13 & 14 & 15 & 16 & \textbf{17} & 18 & 19 & 20\\
\hline
the count & 596 & 459 & 311 & 419 & 465 & 558 & \textbf{1005} & 589 & 324 & 547\\
\hline
\hline
the length of a word & 21 & 22 & 23 & 24 & 25 & 26 & 27 & 28 & 29 & 30\\
\hline
the count & 561 & 366 & 467 & 446 & 528 & 649 & 823 & 459 & 622 & 659\\
\hline
\hline
the length of a word & 31 & 32 & 33 & \textbf{34} & 35 & 36 & 37 & 38 & 39 & 40\\
\hline
the count & 372 & 441 & 735 & \textbf{1053} & 662 & 410 & 450 & 637 & 618 & 507\\
\hline
\hline
the length of a word & 41 & 42 & 43 & 44 & 45 & 46 & 47 & 48 & 49 & 50\\
\hline
the count & 614 & 623 & 524 & 517 & 545 & 534 & 608 & 626 & 476 & 634\\
\hline
\hline
the length of a word & \textbf{51} & 52 & 53 & 54 & 55 & 56 & 57 & 58 & 59 & 60\\
\hline
the count & \textbf{746} & 712 & 604 & 500 & 531 & 513 & 624 & 571 & 602 & 571\\
\hline
\hline
the length of a word & 61 & 62 & 63 & 64 & 65 & 66 & 67 & \textbf{68} & 69 & 70\\
\hline
the count & 475 & 635 & 558 & 569 & 654 & 549 & 767 & \textbf{1160} & 727 & 603\\
\hline
\hline
the length of a word & 71 & 72 & 73 & 74 & 75 & 76 & 77 & 78 & 79 & 80\\
\hline
the count & 631 & 607 & 565 & 617 & 576 & 564 & 542 & 609 & 625 & 620\\
\hline
\hline
the length of a word & 81 & 82 & 83 & 84 & \textbf{85} & 86 & 87 & 88 & 89 & 90\\
\hline
the count & 573 & 610 & 622 & 617 & \textbf{712} & 630 & 629 & 591 & 628 & 566\\
\hline
\hline
the length of a word & 91 & 92 & 93 & 94 & 95 & 96 & 97 & 98 & 99 & 100\\
\hline
the count & 647 & 623 & 579 & 608 & 654 & 649 & 629 & 611 & 625 & 593\\
\hline
\hline
the length of a word & 101 & \textbf{102} & 103 & 104 & 105 & 106 & 107 & 108 & 109 & 110\\
\hline
the count & 654 & \textbf{682} & 628 & 592 & 604 & 628 & 639 & 589 & 614 & 652\\
\hline
\hline
the length of a word & 111 & 112 & 113 & 114 & 115 & 116 & 117 & 118 & \textbf{119} & 120\\
\hline
the count & 625 & 637 & 599 & 606 & 612 & 684 & 648 & 627 & \textbf{828} & 655\\
\hline
\hline
the length of a word & 121 & 122 & 123 & 124 & 125 & 126 & 127 & 128&   &  \\
\hline
the count & 603 & 662 & 635 & 637 & 591 & 651 & 613 & 624 &   &  \\
\hline
\end{tabular}
\end{center}

\section*{Figures}


\begin{figure}[ht!]
  \begin{quantikz}
    \lstick{$\ket{0}$} & \gate{H} & \ctrl{5}   & \ctrl{5}   & \ctrl{5}   & \ctrl{5}   & \qw\ldots & \octrl{5}  & \qw \\
    \lstick{$\ket{0}$} & \gate{H} & \ctrl{4}   & \ctrl{4}   & \ctrl{4}   & \ctrl{4}   & \qw\ldots & \octrl{4}  & \qw \\
    \wave&&&&&&&&& \\
    \lstick{$\ket{0}$} & \gate{H} & \ctrl{2}   & \ctrl{2}   & \octrl{2}  & \octrl{2}  & \qw\ldots & \octrl{2}  & \qw \\
    \lstick{$\ket{0}$} & \gate{H} & \ctrl{1}   & \octrl{1}  & \ctrl{1}   & \octrl{1}  & \qw\ldots & \octrl{1}  & \qw \\
    \lstick{$\ket{0}$} & \qw      & \gate{U_1} & \gate{U_2} & \gate{U_3} & \gate{U_4} & \qw\ldots & \gate{U_t} & \qw
  \end{quantikz}
  \caption{Deep fingerprinting circuit example. Gate \(U_{j}\) is a rotation \(R_{y}(4\pi k_{j}x / p)\). Controls in controlled gates run over all binary strings of length \(s\)} \label{fig:deep-circuit}
\end{figure}
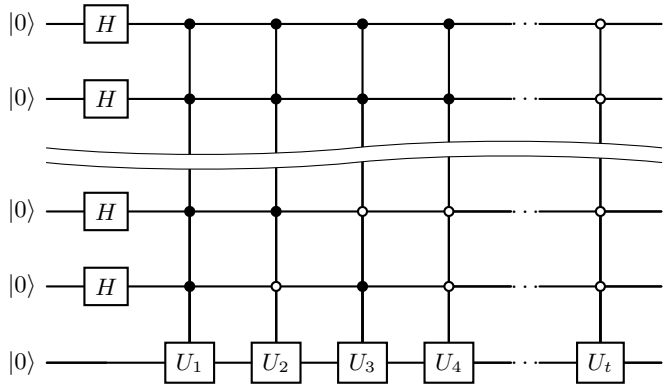

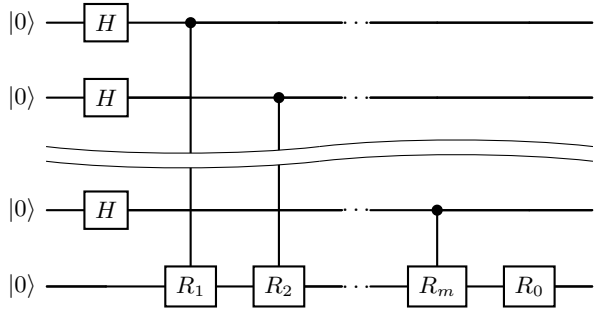
\begin{figure}[ht!]
  \begin{quantikz}
  \lstick{$\ket{0}$} & \gate{H} & \ctrl{4} & \qw      & \qw\ldots & \qw      & \qw & \qw\\
  \lstick{$\ket{0}$} & \gate{H} & \qw      & \ctrl{3} & \qw\ldots & \qw      & \qw & \qw\\
  \wave&&&&&&& \\
  \lstick{$\ket{0}$} & \gate{H} & \qw      & \qw      & \qw\ldots & \ctrl{1} & \qw & \qw\\
  \lstick{$\ket{0}$} &  \qw     &\gate{R_1}&\gate{R_2}& \qw\ldots &\gate{R_m}& \gate{R_0} & \qw
  \end{quantikz}
  \caption{Shallow fingerprinting circuit example. Gate \(R_{j}\) is a rotation \(R_{y}(4\pi t_{j}x / p)\)} \label{fig:shallow-circuit}
\end{figure}

 \begin{figure}[ht!]
    \begin{quantikz}
      \lstick{$\ket{0}$} & \gate{H} & \ctrl{4} & \qw      & \qw\ldots & \qw      & \qw & \qw\\
      \lstick{$\ket{0}$} & \gate{H} & \qw      & \ctrl{3} & \qw\ldots & \qw      & \qw & \qw\\
      \wave&&&&&&& \\
      \lstick{$\ket{0}$} & \gate{H} & \qw      & \qw      & \qw\ldots & \ctrl{1} & \qw & \qw\\
      \lstick{$\ket{0}$} &  \qw     &\gate{R_{j,1}}&\gate{R_{j,2}}& \qw\ldots &\gate{R_{j,w-1}}& \gate{R_{j}} & \qw
    \end{quantikz}
    \caption{Circuit \(\mathcal{C}_j\) for AIKPS subsequence. Gate \(R_{j}\) is a rotation \(R_{y}(4\pi (r_{j}^{-1}) / p)\). Gate \(R_{j,k}\) is a rotation \(R_{y}(2^{k-1} \cdot 4\pi (r_{j}^{-1}) / p)\)} \label{fig:aikps-circuit}
  \end{figure}
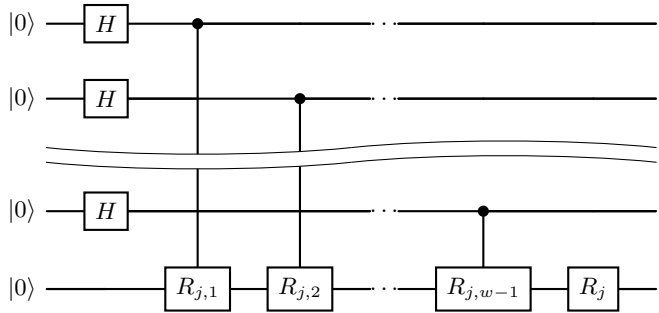

  \begin{figure}[h!]
%
%
%
  \includegraphics[width=0.8\textwidth]{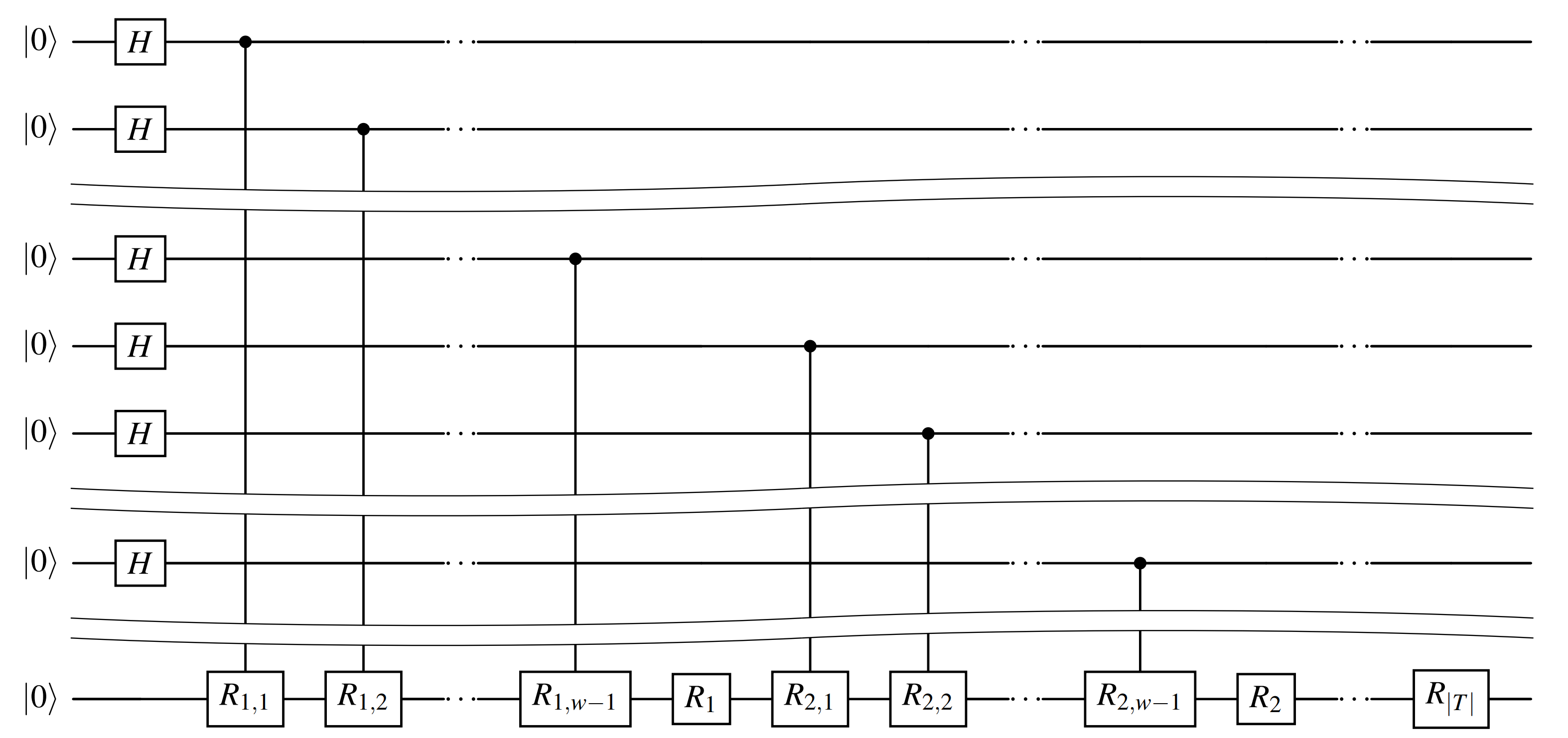}
  \caption{Circuit for \(U_{a}(T)\). Gate \(R_{j}\) is a rotation \(R_{y}(4\pi (r_{j}^{-1}) / p)\). Gate \(R_{j,k}\) is a rotation \(R_{y}(2^{k-1} \cdot 4\pi (r_{j}^{-1}) / p)\)} \label{fig:full-aikps-circuit}
  \end{figure}

\begin{figure}[ht!]
\caption{Computational errors for $m=3,4,5$ of original and shallow automata}
\label{fig:s5}
\centering
\includegraphics[width=0.75\textwidth]{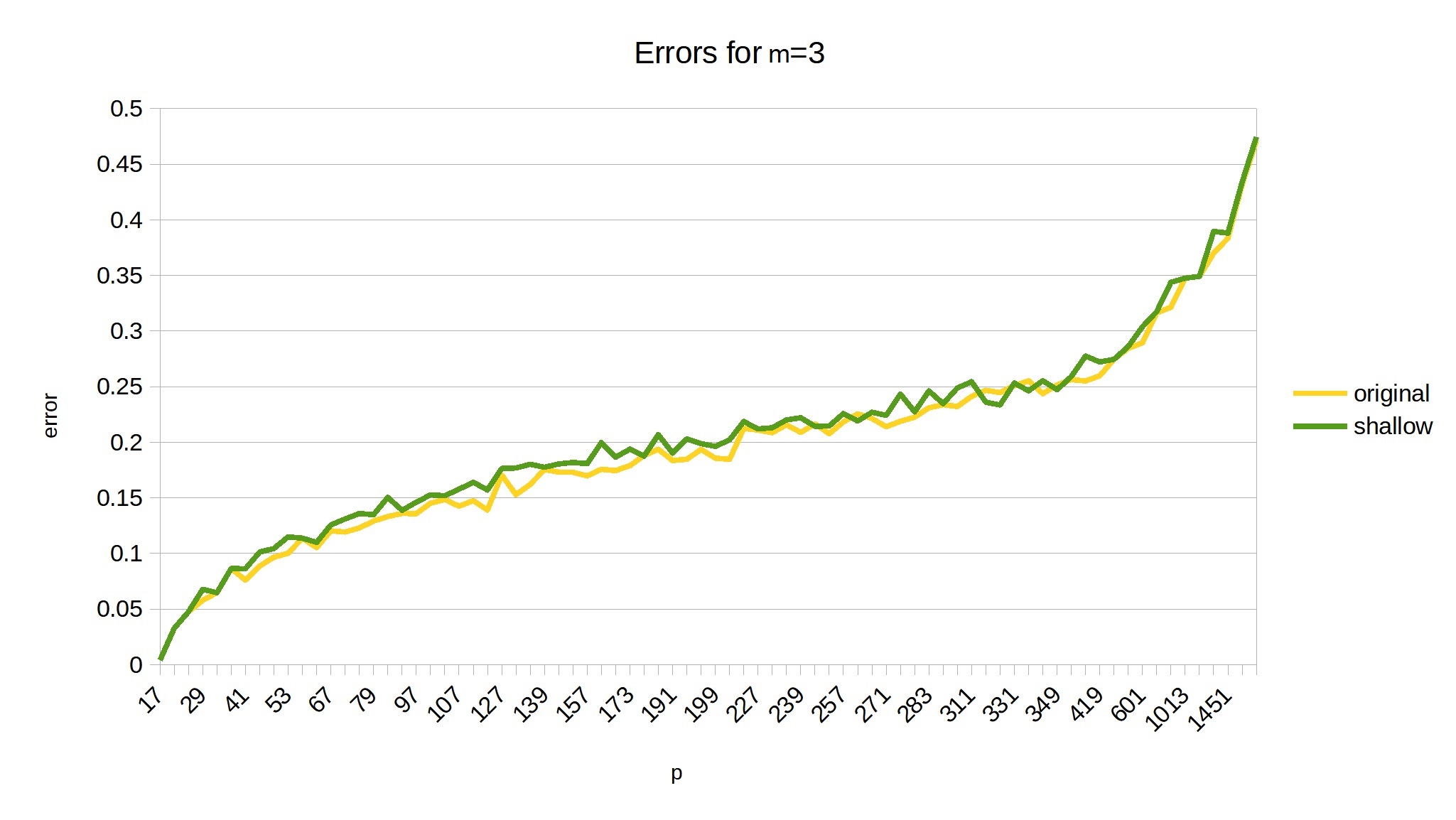}
\includegraphics[width=0.75\textwidth]{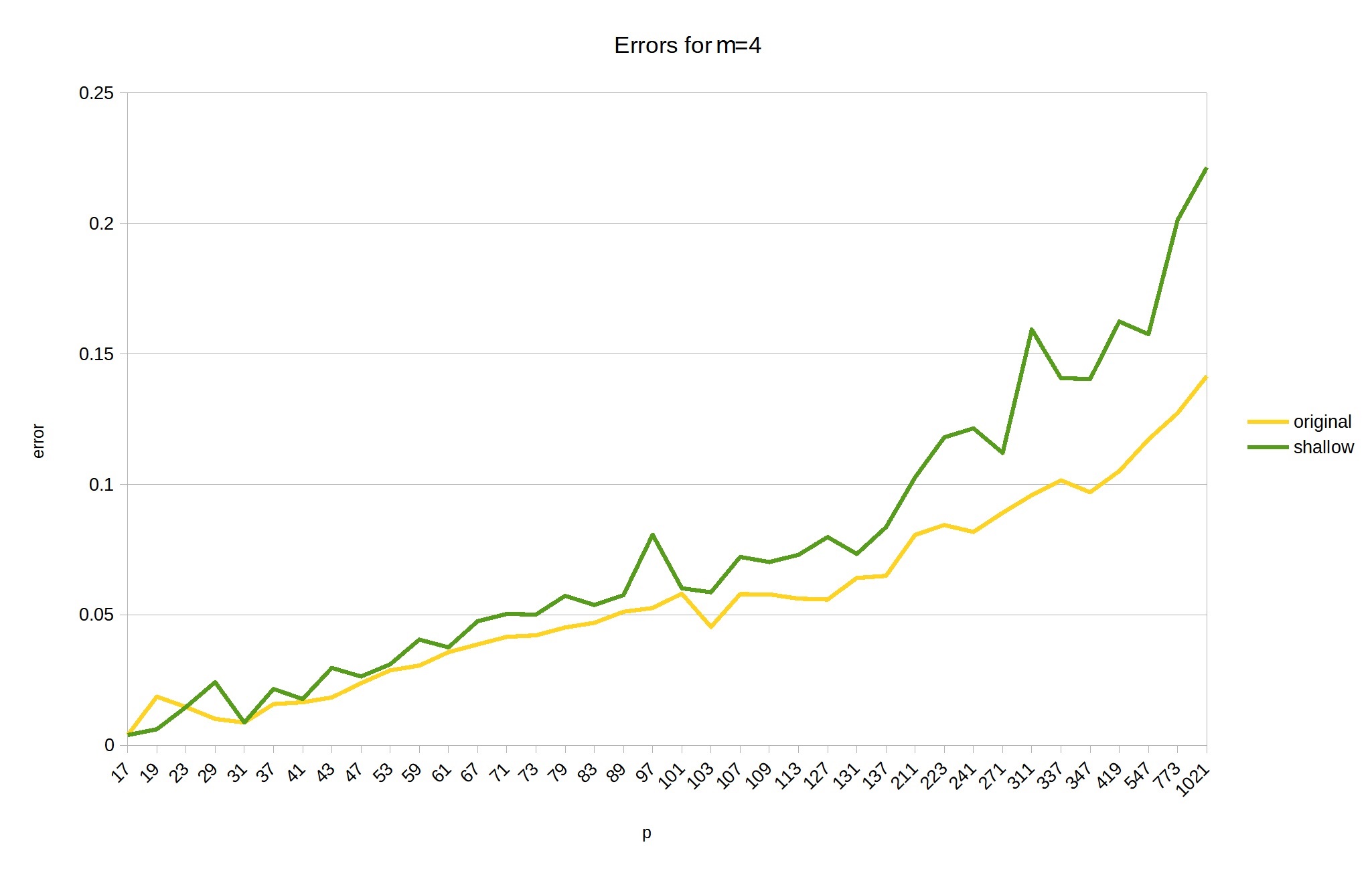}
\includegraphics[width=0.75\textwidth]{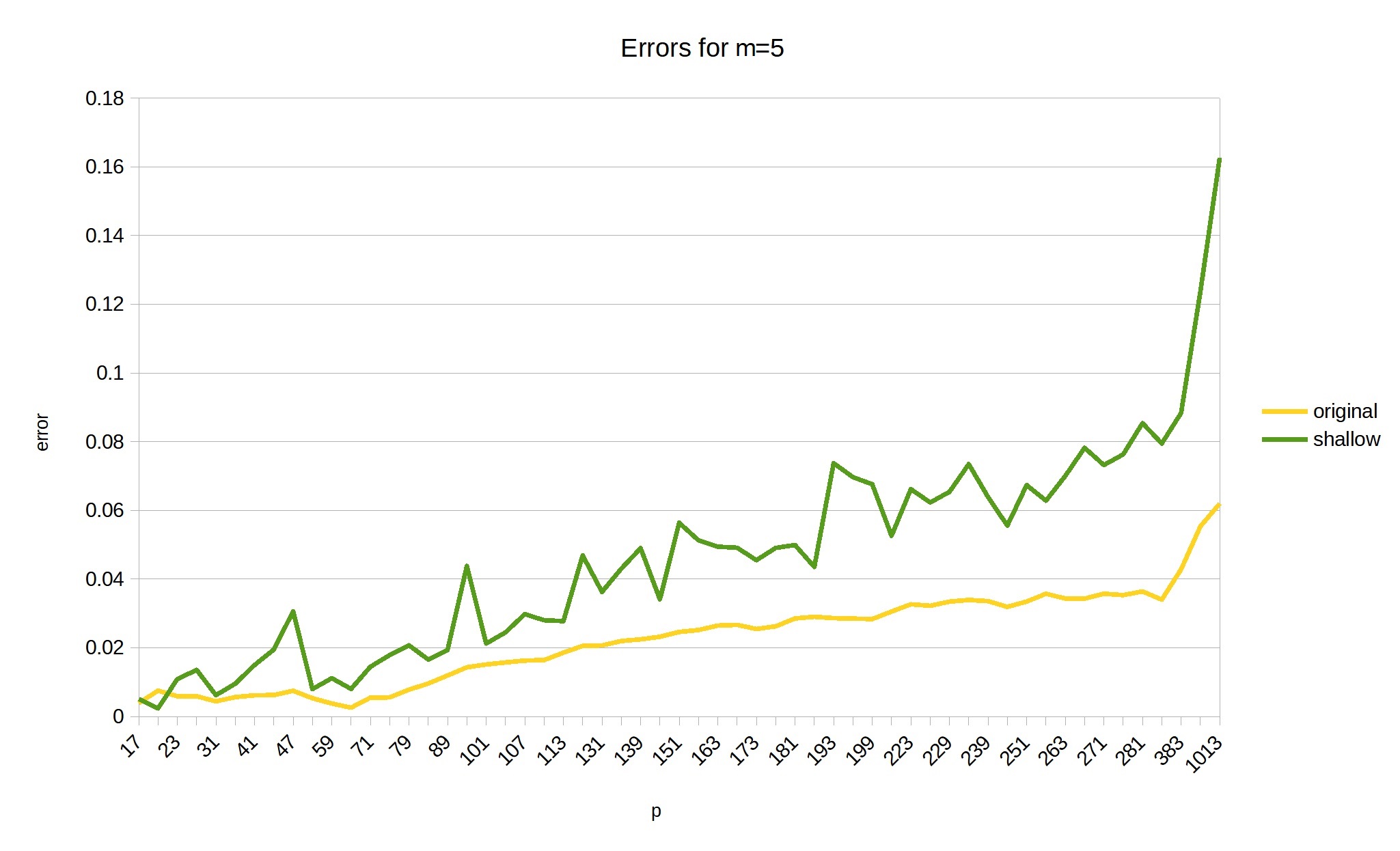}
\end{figure}  

\begin{figure}[ht!]
\caption{Proportions of the shallow automaton errors over the original automaton errors for $m=3,4,5$ and different values of $p$}
\label{fig:s5-prop}
\centering
\includegraphics[width=0.75\textwidth]{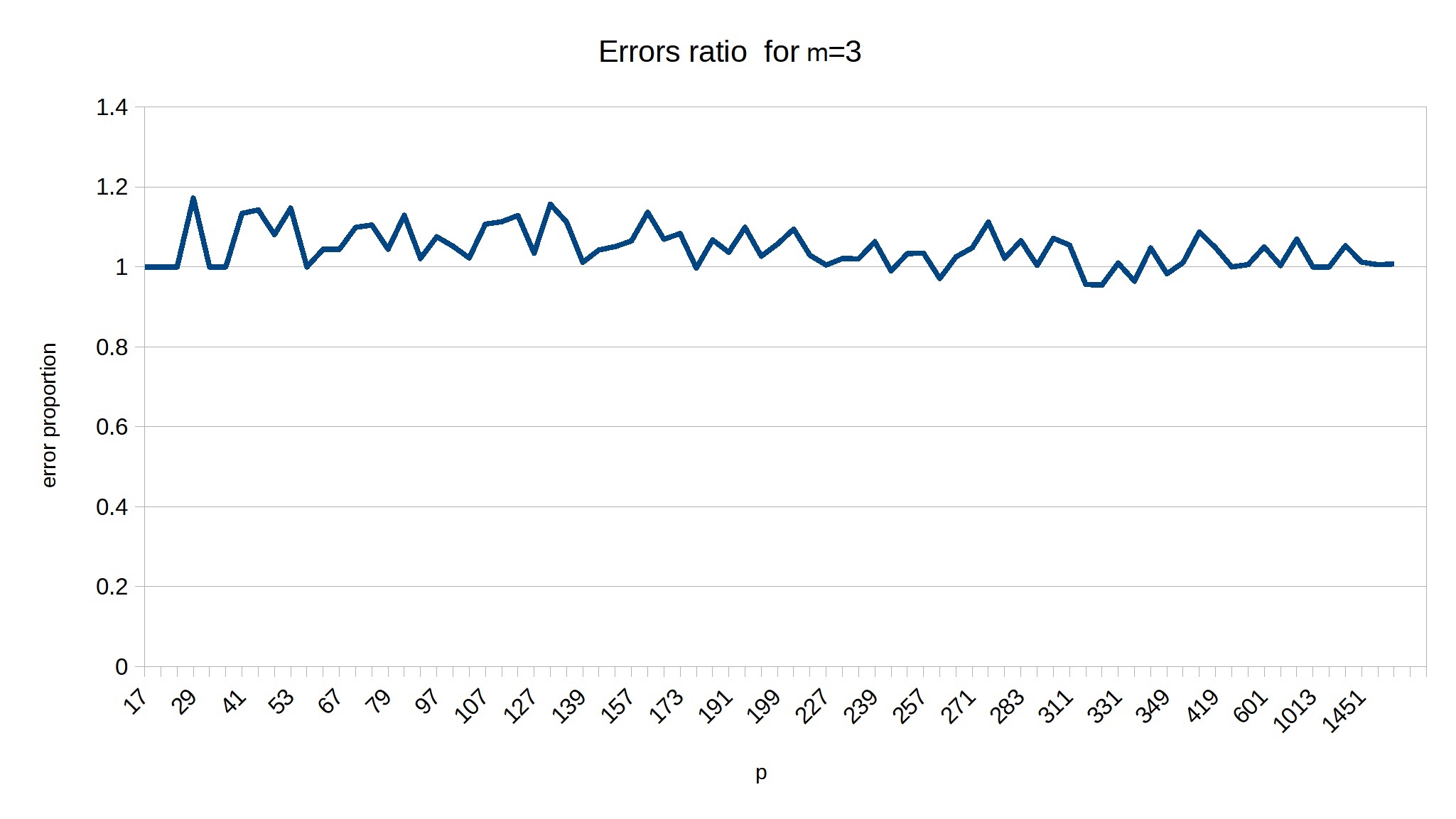}
\includegraphics[width=0.75\textwidth]{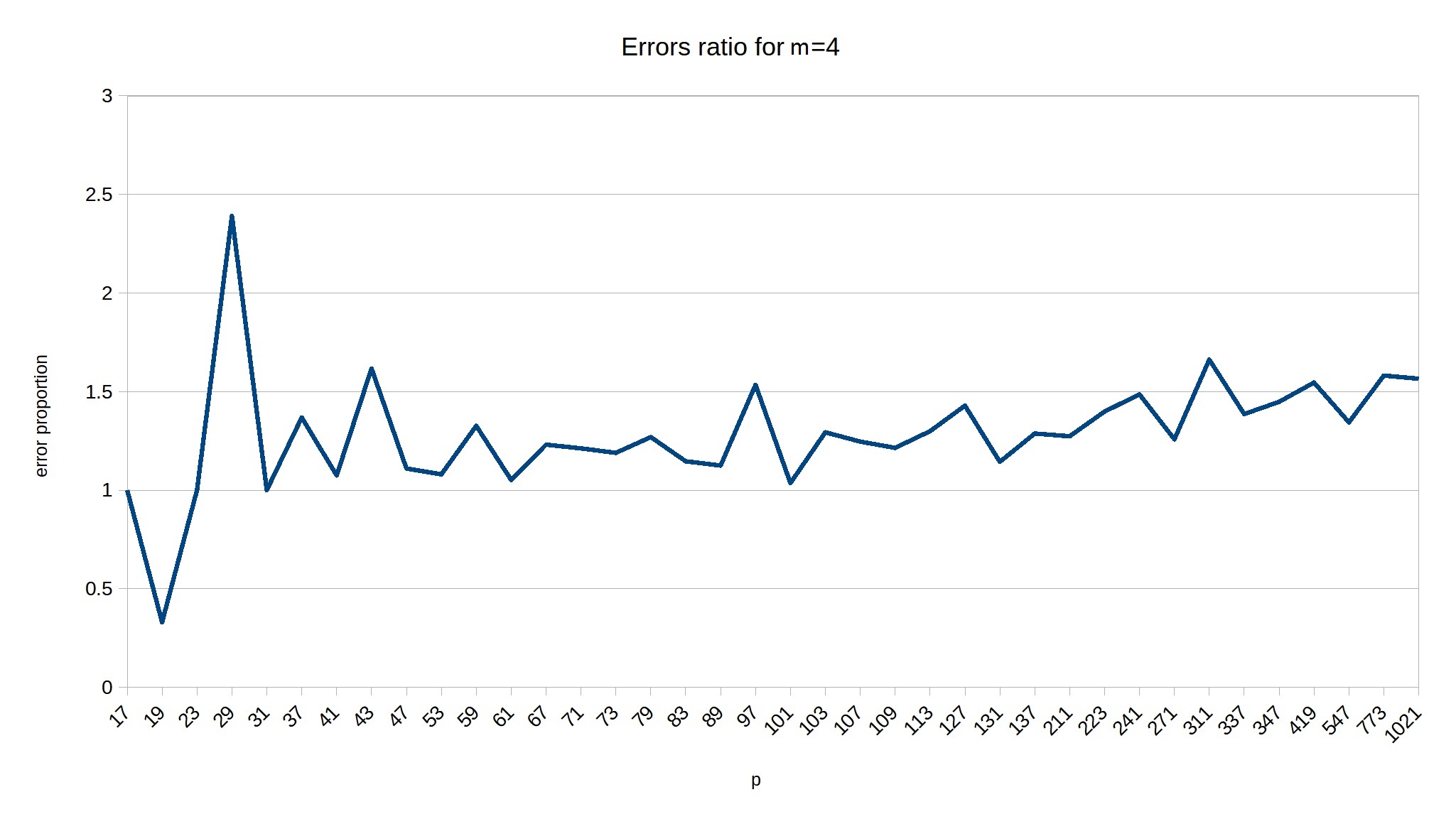}
\includegraphics[width=0.75\textwidth]{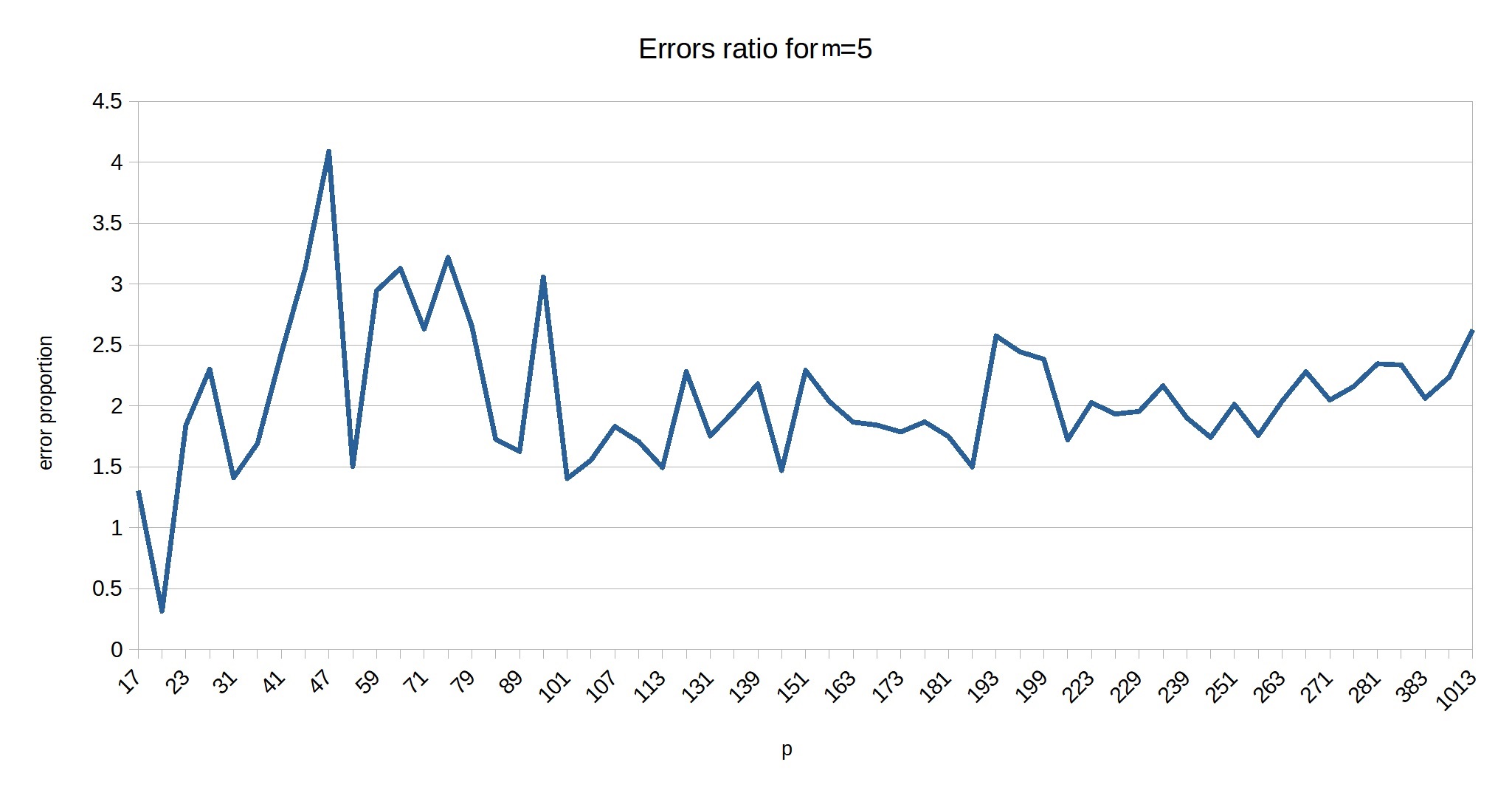}
\end{figure}
 
\begin{figure}
\includegraphics[width=0.8\textwidth]{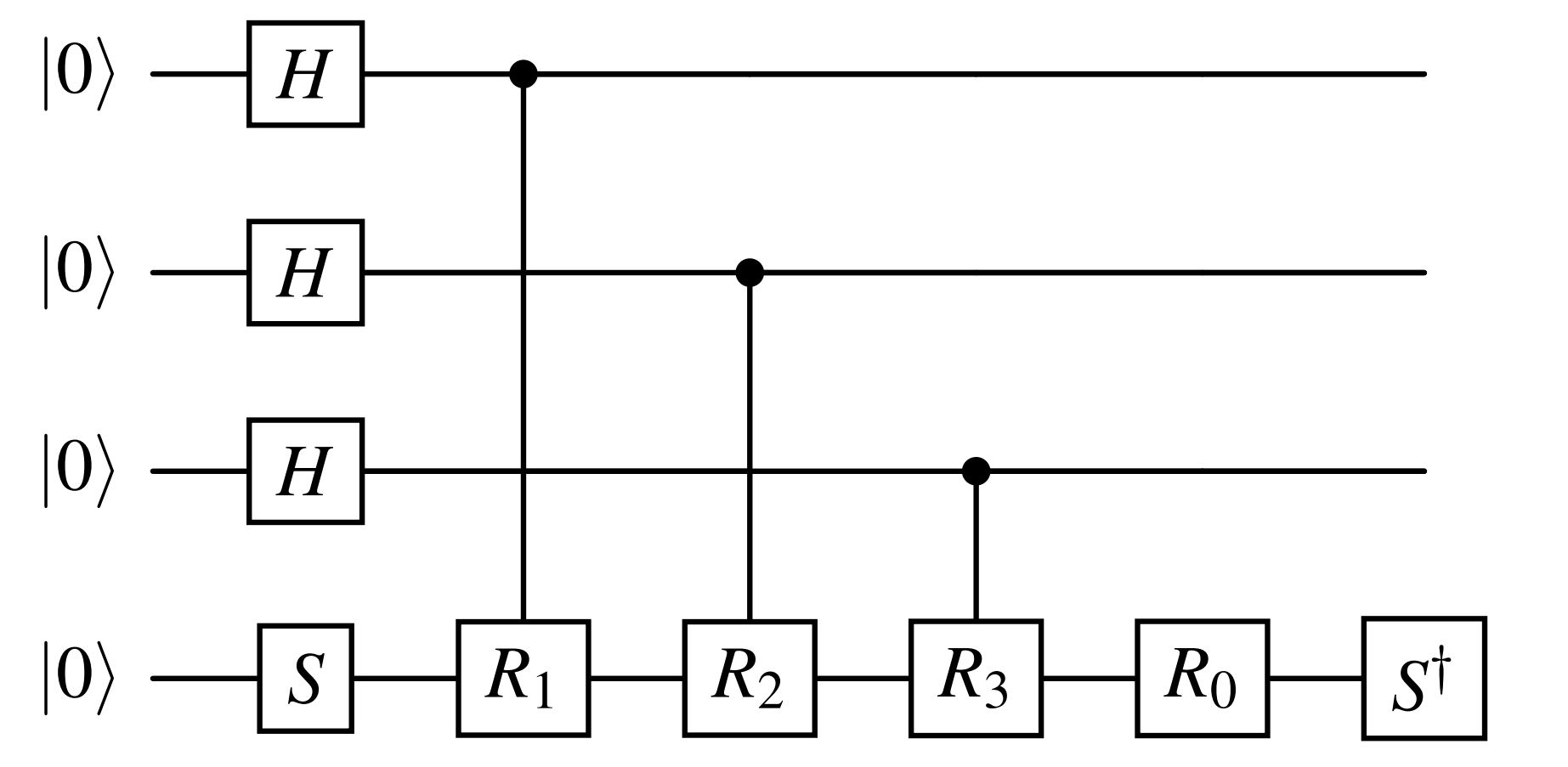}
  \caption{Shallow circuit that uses Rz operators} \label{fig:short-rz}
\end{figure}

\begin{figure}[ht!]
\caption{Computational errors for $m=3$ of shallow automata for noisy device}
\label{fig:short-results}
\centering
\includegraphics[width=0.8\textwidth]{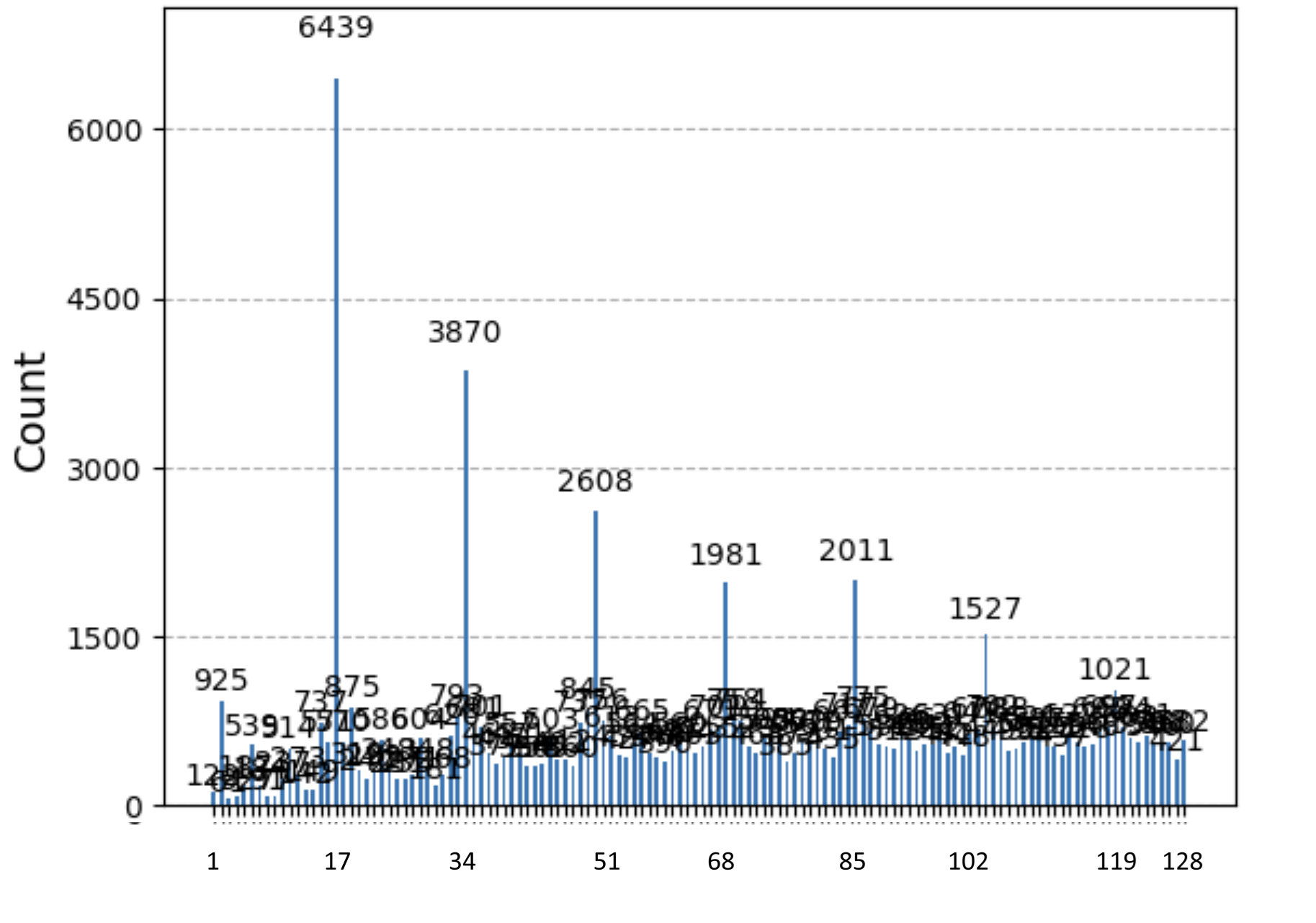}
\end{figure}

\begin{figure}[ht!]
\caption{Computational errors for $m=3$ of standard fingerprinting scheme automata for noisy device}
\label{fig:ucr-results}
\centering
\includegraphics[width=0.8\textwidth]{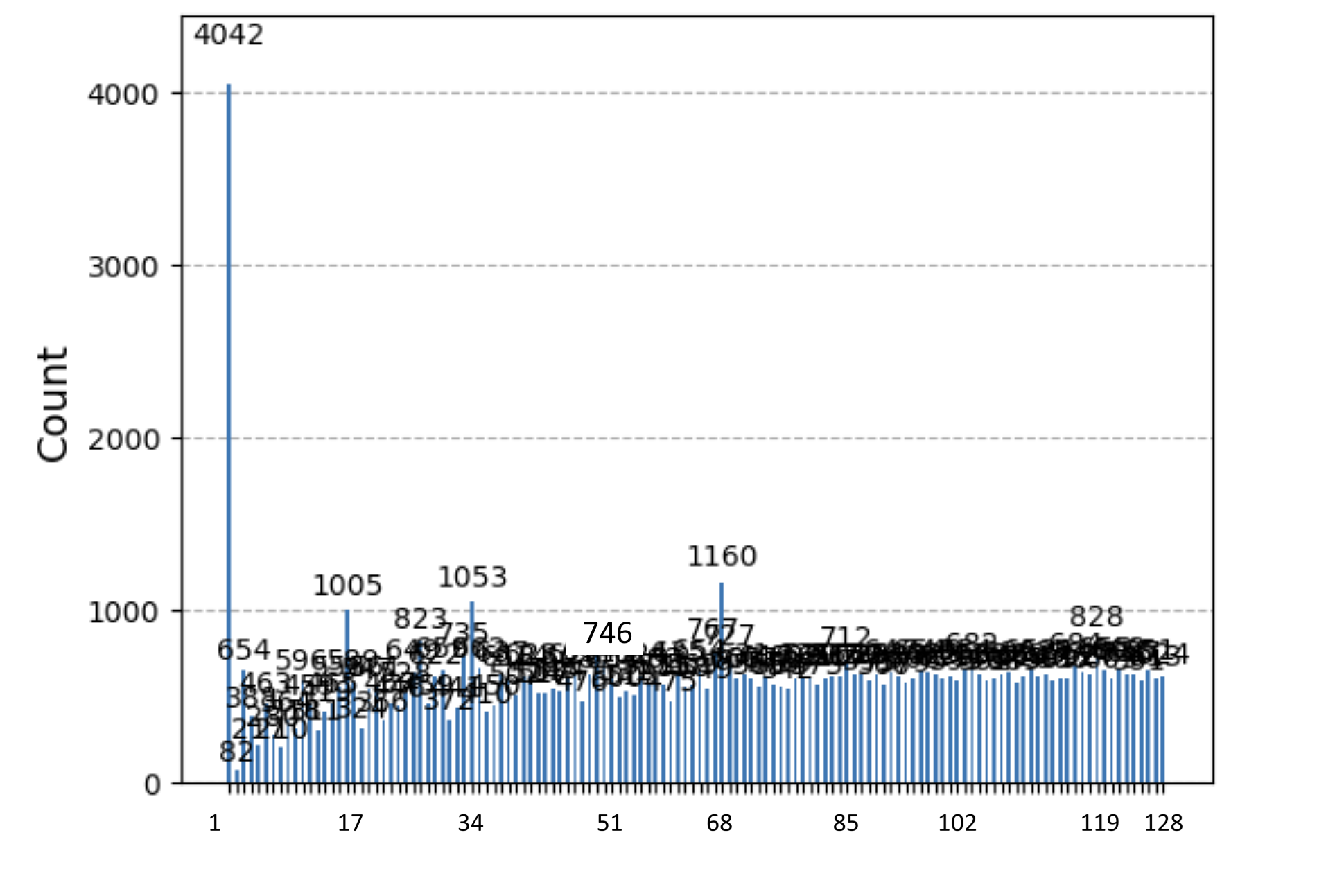}
\end{figure}

\newpage

\newpage

%
%
%
\bibliographystyle{splncs04}
\bibliography{test}


%
%
%
\end{document}